\def\resetMathstrut@{%
  \setbox\z@\hbox{%
    \mathchardef\@tempa\mathcode`\(\relax
    \def\@tempb##1"##2##3{\the\textfont"##3\char"}%
    \expandafter\@tempb\meaning\@tempa \relax
  }%
  \ht\Mathstrutbox@1.2\ht\z@ \dp\Mathstrutbox@1.2\dp\z@
}
\newtheorem{theorem}{Theorem}[section]
\newtheorem{lemma}[theorem]{Lemma}
\newtheorem{definition}[theorem]{Definition}
\newcommand{\R}{\ensuremath{\mathbb{R}}}
\renewcommand{\subset}{\subseteq}
\newcommand{\eps}{\varepsilon}
\renewcommand{\epsilon}{\varepsilon}
\renewcommand{\leq}{\leqslant}
\renewcommand{\geq}{\geqslant}
\newcommand{\E}{\mathbb{E}}
  \DeclareMathOperator{\tr}{tr}
\begin{document}

\title{Satisfying the Restricted Isometry Property with the Optimal Number of Rows and Slightly Less Randomness}
\author[1]{Shravas Rao \footnote{Funding: This material is based upon work supported by the National Science Foundation under Award No. 2348489}}
\affil[1]{Department of Computer Science, Portland State University \protect \\
1900 SW 4th Ave, Portland, OR 97201 \protect \\
shravas@pdx.edu}
\maketitle

\begin{abstract}
A matrix $\Phi \in \R^{Q \times N}$ satisfies the restricted isometry property if $\|\Phi x\|_2^2$ is approximately equal to $\|x\|_2^2$ for all $k$-sparse vectors $x$.
We give a construction of RIP matrices with the optimal $Q = O(k \log(N/k))$ rows using $O(k\log(N/k)\log(k))$ bits of randomness.
The main technical ingredient is an extension of the Hanson-Wright inequality to $\eps$-biased distributions.
\end{abstract}

\noindent {\bf Keywords:} Compressed Sensing, Randomness Reduction, Restricted Isometry Property

\section{Introduction}

A matrix $\Phi \in \R^{Q \times N}$ is said to satisfy the $(k, \eta)$-restricted isometry property if for every $k$-sparse vector $x \in \R^{N}$, one has
\[
(1-\eta)\|x\|_2^2 \leq \|\Phi x\|_2^2 \leq (1+\eta)\|x\|_2^2.
\]
where $\|\cdot\|_2$ denotes the $\ell_2$ norm.
This notion, introduced by Cand\`es and Tao~\cite{CT05}, has many applications especially in the field of compressed sensing~\cite{C08}.

A major goal is to construct such matrices so that the number of rows $Q$ is as small as possible compared to $k$ and $N$.
In this note, we will mostly ignore the dependence on $\eta$.
It is known that $Q = \Omega(k \log(N/k))$ rows are necessary for the restricted isometry property to hold~\cite{FPRU10}.

If the construction is required to be deterministic there are many constructions with $Q = \widetilde{O}(k^2)$ rows~\cite{K75, AGHP92, D07, NT11}.
In a breakthrough result due to Bourgain, Dilworth, Ford, Konyagin, and Kutzarova~\cite{BDFKK11}, a construction with $O(k^{2-\eps_0})$ rows was obtained, for some very small constant $\eps_0$.
This was later improved by Mixon~\cite{M15}, and Bandeira, Mixon, and Moreira~\cite{BMM17} conditioned on a number-theoretic conjecture.
It is known that significant improvements would imply explicit constructions of Ramsey graphs~\cite{G20}.

On the other hand, there exist randomized constructions that do achieve the optimal number of rows.
In particular, if each entry is an independent Gaussian or Bernoulli random variable, then the restricted isometry property holds for $Q = O(k \log(N/k))$~\cite{CT06, BDDW08, MPTJ08}.

Randomized constructions can be evaluated by the number of bits of randomness they use.
When all entries are independent, then $O(NQ)$ bits of randomness are needed.
Thus, towards the goal of deterministic constructions of matrices that satisfy the restricted isometry property, one can ask if there exist constructions that use fewer bits of randomness.

There is a long line of work that considers a construction where each row of $\Phi$ is a random row of a Fourier or Walsh-Hadamard matrix~\cite{CT06, RV08, CGV13, B14, HR17}.
Such constructions use $O(Q\log(N))$ bits of randomness.
The analysis due to Haviv and Regev~\cite{HR17}, and later improved by~\cite{BDJR21}, obtain the best-known bound on the number of rows $Q = O(k\log(N)\log^2(k))$.
However, it is known that when the rows are obtained from a Walsh-Hadamard matrix, the number of rows must be $Q = \Omega(k \log(N) \log(k))$~\cite{BLLMR19}.
Thus, such constructions can not achieve the optimal number of rows that constructions with independent entries can.

Another choice of construction allows for the entries of $\Phi$ to come from a $O(Q)$-wise independent distribution.
Combining the analysis of~\cite{CW09} with~\cite{BDDW08}, a slight modification to the analysis for independent entries holds for $O(Q)$-wise independent entries.
In particular, one can let $Q$ be the optimal $O(k\log(N/k))$.
Standard constructions of $O(Q)$-wise independent random variables when $Q = O(k\log(N/k))$ require $O(k\log(N/k) \log(N))$ bits of randomness.
This was slightly improved in~\cite{KN10} to a construction that requires only $O(k\log(N/k) \log(k\log(N/k)))$ bits of randomness.

Finally, a third line of work uses pseudorandom properties of the Legendre symbol.
This approach uses $O(k\log(N)\log(k))$ bits of randomness, but requires $Q = O(k\log(N)\log^2(k))$ rows~\cite{BFFM16}.

In this note, we use almost $O(Q)$-wise independent distributions when $Q = O(k \log(N/k))$ to show that that $O(k \log(N/k)\log(k))$ random bits is enough to construct a matrix that satisfies the restricted isometry property.
This is the least amount of randomness used among all randomized constructions with the optimal number of rows listed above.
However, when $k \sim N^c$ for some constant $c$ for example, this recovers the result in~\cite{KN10} as $\log(k) = \Theta(\log(N))$ in this case.

The pseudorandom properties of the almost $O(Q)$-wise independent distributions that we use are similar to the pseudorandom properties of the Legendre symbol used in~\cite{BFFM16}.
Compared to~\cite{KN10} and~\cite{BFFM16}, our proof is arguably simpler.
However, it is conjectured that the Legendre symbol can be used to construct deterministic RIP matrices, something that is not true for the techniques in this note.

\begin{theorem}\label{thm:main}
There exists a distribution of $Q \times N$ matrices for $Q = O(k\log(N/k)\eta^{-2})$ that can be sampled efficiently using $O(k\log(k)\log(N/k))$ bits of randomness such that a sample $\Phi$ from this distribution satisfies the $(k, \eta)$-restricted isometry property with high probability.
\end{theorem}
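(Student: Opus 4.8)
The plan is to run the classical random-sign argument but to draw the sign matrix from an $\epsilon$-biased source and to replace the concentration step by the $\epsilon$-biased Hanson--Wright inequality announced in the abstract. Set $\Phi_{ij} = \sigma_{ij}/\sqrt{Q}$, where $\sigma = (\sigma_{ij})_{i\in[Q],\,j\in[N]} \in \{\pm1\}^{QN}$ is drawn from a distribution that is $(t,\epsilon)$-biased, meaning $\bigl|\E\prod_{(i,j)\in S}\sigma_{ij}\bigr| \le \epsilon$ for every nonempty $S$ of size at most $t$; we will take $t = \Theta_\eta(k\log(N/k))$, $Q = \Theta_\eta(k\log(N/k))$, and $\epsilon = \exp(-\Theta_\eta(k\log(N/k)\log k))$, so that $\sigma$ is an almost $\Theta_\eta(Q)$-wise independent string. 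Such sources have explicit, efficiently samplable constructions using $O(\log(QN) + t + \log(1/\epsilon))$ random bits, and since $\log(QN) = O(\log N) = O(k\log(N/k))$ (for $N\ge 2k$), the seed length is $O_\eta(k\log(N/k)\log k)$, as desired. It then remains to prove that $\Phi$ is $(k,\eta)$-RIP with high probability.

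I would first reduce to a fixed sparse vector by the standard net argument. For each $k$-subset $T\subseteq[N]$ fix a $\tfrac14$-net $\mathcal N_T$ of the unit sphere of $\R^T$ with $|\mathcal N_T|\le 9^k$; a routine argument shows that if $\bigl|\,\|\Phi y\|_2^2-1\,\bigr|\le \eta/2$ for all $y\in\bigcup_T\mathcal N_T$ then $\Phi$ is $(k,\eta)$-RIP. The number of such $y$ is at most $\binom{N}{k}9^k = \exp(O(k\log(N/k)))$, so by a union bound it suffices to show that a fixed unit $k$-sparse $x$ violates $\bigl|\,\|\Phi x\|_2^2-1\,\bigr|\le\eta/2$ with probability $\exp(-C'k\log(N/k))$, where $C'$ is a constant we are free to enlarge (by enlarging the hidden constants in $t$, $Q$, $\log(1/\epsilon)$).

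Fix a unit $k$-sparse $x$ and write $\|\Phi x\|_2^2 = \sigma^\top A_x\sigma$, where $A_x$ is the $QN\times QN$ block-diagonal matrix with $Q$ blocks each equal to $\tfrac1Q xx^\top$. Then $\tr A_x = 1$, $\|A_x\|_F = Q^{-1/2}$, $\|A_x\|_{\mathrm{op}} = Q^{-1}$, and $\sum_{u,v}|(A_x)_{uv}| = \|x\|_1^2 \le k$ because $x$ is $k$-sparse. Feeding this into the $\epsilon$-biased Hanson--Wright inequality -- which I expect to reproduce the classical moment bound up to an additive error governed by $\epsilon$ and the entrywise $\ell_1$-norm of the matrix -- gives, for every even $p \le \min(t/2,\,Q)$,
\[
\E\bigl[(\sigma^\top A_x\sigma - \tr A_x)^p\bigr] \;\le\; \bigl(C\sqrt{p}\,\|A_x\|_F + Cp\,\|A_x\|_{\mathrm{op}}\bigr)^p + \epsilon\,\Bigl(\textstyle\sum_{u,v}|(A_x)_{uv}|\Bigr)^p \;\le\; \bigl(C\sqrt{p/Q}\bigr)^p + \epsilon\, k^p ,
\]
and Markov's inequality then yields $\Pr[\,|\sigma^\top A_x\sigma - 1| > \eta/2\,] \le \bigl(C\sqrt{p/Q}/\eta\bigr)^p + \epsilon\,(Ck/\eta)^p$ after adjusting $C$. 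Taking $p = \Theta(k\log(N/k))$ with a large enough constant, $Q = \Theta(p/\eta^2) = \Theta_\eta(k\log(N/k))$ so that the first term is at most $2^{-p}$, $t = 2p$, and $\epsilon$ small enough that $\epsilon(Ck/\eta)^p \le 2^{-p}$ -- which forces precisely $\log(1/\epsilon) = \Theta_\eta(k\log(N/k)\log k)$ -- makes each net point fail with probability $\exp(-\Omega(k\log(N/k)))$, with a constant controlled through the choice of $p$; the union bound over the $\exp(O(k\log(N/k)))$ net points finishes the proof.

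The main obstacle is establishing the $\epsilon$-biased Hanson--Wright inequality in its \emph{sharp} form, with the Frobenius term scaling as $\sqrt p\,\|A\|_F$ and only the operator-norm term scaling as $p\,\|A\|_{\mathrm{op}}$: the weaker bound $(p\,\|A\|_F)^p$ available from hypercontractivity alone would force $Q = \Theta_\eta\bigl(k^2\log^2(N/k)\bigr)$ and defeat the purpose. I would prove it by adapting the combinatorial/moment proof of the classical inequality. Since $\sigma_{ij}^2 = 1$, the form $\sigma^\top A_x\sigma - \tr A_x$ equals the off-diagonal chaos $\sum_{u\ne v}(A_x)_{uv}\sigma_u\sigma_v$; expanding its $p$-th power into monomials in the $\sigma_{ij}$, every monomial in which some variable has odd multiplicity -- which contributes $0$ under full independence -- now contributes at most $\epsilon$ in absolute value (at most $2p\le t$ distinct variables are involved), so the total deviation from the fully independent computation is at most $\epsilon\bigl(\sum_{u,v}|(A_x)_{uv}|\bigr)^p$, while the all-even monomials reproduce the classical estimate verbatim. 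This crude $\ell_1$-bound $\sum_{u,v}|(A_x)_{uv}| = \|x\|_1^2 \le k$ is also precisely the source of the extra $\log k$ factor in the randomness, and it is tight for a flat $k$-sparse vector.
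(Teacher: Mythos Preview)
Your proposal is correct and follows essentially the same approach as the paper: the same block-diagonal matrix $A_x$, the same $\epsilon$-biased Hanson--Wright inequality (your sketch of it matches the paper's Lemma~3.1 proof line for line), and the same Markov-plus-union-bound conclusion with the same parameter choices. The only cosmetic difference is that you spell out the $\epsilon$-net reduction explicitly whereas the paper invokes \cite{BDDW08} for it.
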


The proof of this theorem follows the same structure as~\cite{CW09} with~\cite{BDDW08} when the entries of $\Phi$ come from a truly $O(Q)$-wise independent distribution.
In particular, one focuses on the more general question of constructing a matrix $\Phi$ that is a Johnson-Lindenstrauss projection.
That is, one shows that for any fixed unit vector $x$
\begin{equation}\label{eq:jl}
\Pr\left[\left|\|\Phi x\|_2^2 -1\right| \geq \eta \right] \leq \exp(-cQ\eta^2)
\end{equation}
for some constant $c$.
It was shown in~\cite{BDDW08} that if Eq.~\eqref{eq:jl} holds for $Q = O(k\log(N/k))$ when $\eta < 1$, then the random matrix $\Phi$ satisfies the restricted isometry property with high probability.

In this note, we only consider the case of when $x$ is $k$-sparse.
That is, when $x$ is not $k$-sparse, and the entries of $\Phi$ come from an almost $O(Q)$-wise independent distribution, the number of random bits required for our generalization of the Hanson-Wright inequality to hold is too large to improve upon the result in~\cite{CW09}.
However, for the purposes of constructing a matrix that satisfies the restricted isometry property, it is enough to consider only $k$-sparse $x$.
This is because for the restricted isometry property to hold, it is enough to show that $\|\phi x\|_2^2 \approx \|x\|_2^2$ for $k$-sparse vectors $x$.

\section{Preliminaries}

If $x \in \R^N$ is a vector, we define the norm
\[
\|x\|_p^p = \sum_{i=1}^N |x(i)|^p.
\]
If $A \in \R^{N \times N}$ is a matrix, we define the norms
\[
\|A\|_{L_1} = \sum_{i, j} |A_{i, j}|, \;\; \|A\|_2 = \max_{\|x\|_2 = 1} \|Ax\|_2, \;\; \|A\|_{S_2} = (\tr(A A^*))^{1/2}.
\]

The construction of matrices is derived from $\eps$-biased distributions, which we define below.

\begin{definition}
A distribution $S \subset \{-1, 1\}^n$ is $\eps$-biased $\ell$-wise independent if when $x = (x_1, x_2, \ldots, x_n)$ is sampled uniformly from $S$, for all $y \subset [n]$ such that $0 < |y| \leq \ell$,
\[
\left|\E\left[\prod_{i \in y} x_i\right]\right| \leq \eps.
\]
\end{definition}

There exist $\eps$-biased $\ell$-wise independent distributions $S$ such that $\log(|S|) = O(\log(\log(n))+\ell+\log(1/\eps))$, due to~\cite{AGHP92, NN93}.

The main ingredient of the proof of Theorem~\ref{thm:main} is a generalization of the Hanson-Wright inequality.
This inequality can often be used to obtain concentration inequalities for quadratic forms using Markov's inequality, as will be done here.
We state the original below~\cite{HW71} (see Theorem 3.1 in~\cite{KN12}).

\begin{theorem}\label{thm:hw}
There exists a constant $C$ such that the following holds.
Let $(z_1, \ldots, z_n)$ be independent random variables uniform over $\{-1, 1\}$.
Then for any symmetric $B \in \R^{n \times n}$ and integer $\ell \geq 2$ a power of $2$,
\[
\E\left[\left(z^T B z - \tr(B)\right)^{\ell}\right] \leq 
C^{\ell} \max\left\{\sqrt{\ell} \cdot \|B\|_{S_2}, \ell \cdot \|B\|_2 \right\}^{\ell}\]
\end{theorem}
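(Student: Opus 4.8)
The plan is to deduce the Rademacher moment bound from the analogous (and more transparent) bound for Gaussian quadratic forms, via decoupling and a Khintchine-type comparison, and then to prove the Gaussian bound by diagonalizing $B$ and invoking a Bernstein-type moment inequality for sums of independent subexponential random variables.

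First I would reduce to the case that $B$ has zero diagonal. Since $z_i^2 = 1$ almost surely, $z^T B z - \tr(B) = z^T \bar B z$ with $\bar B \eqdef B - \diag(B)$; moreover $\tr(\bar B) = 0$, $\|\bar B\|_{S_2} \le \|B\|_{S_2}$, and $\|\bar B\|_2 \le 2\|B\|_2$, so it suffices to prove $\E[(z^T B z)^{\ell}] \le C^{\ell}\max\{\sqrt{\ell}\,\|B\|_{S_2},\, \ell\,\|B\|_2\}^{\ell}$ when $B$ has zero diagonal. Next I would apply a standard decoupling inequality for quadratic Rademacher chaos to the convex function $t \mapsto t^{\ell}$ (here we use that $\ell$ is even, so $t^{\ell} = |t|^{\ell}$ is convex) to get $\E[(z^T B z)^{\ell}] \le C^{\ell}\,\E[(z^T B z')^{\ell}]$, where $z'$ is an independent copy of $z$. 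Conditioning on $z'$, the quantity $z^T B z' = \ip{z, Bz'}$ is a Rademacher linear form in $z$, whose $\ell$-th moment is dominated up to a constant by that of the Gaussian linear form $\ip{g, Bz'}$ (compare the $O(\sqrt{\ell})$ upper Khintchine bound with the matching Gaussian lower bound, valid since $\ell \ge 2$); repeating once more in the $z'$ variable yields $\E[(z^T B z')^{\ell}] \le C^{\ell}\,\E[(g^T B g')^{\ell}]$ with $g, g'$ independent standard Gaussian vectors.

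Now I would use rotational invariance. Writing $B = U\Lambda U^T$ with $\Lambda = \diag(\lambda_1,\dots,\lambda_n)$, the vectors $U^T g$ and $U^T g'$ are again independent standard Gaussian, so $g^T B g'$ has the same distribution as $\sum_i \lambda_i \hat g_i \hat g_i'$. Conditioning on $\hat g$, this is a centered Gaussian with variance $\|\Lambda \hat g\|_2^2 = \sum_i \lambda_i^2 \hat g_i^2$, so $\E[(g^T B g')^{\ell}] \le \ell^{\ell/2}\,\E\big[(\sum_i \lambda_i^2 \hat g_i^2)^{\ell/2}\big]$, using $\E[N(0,1)^{\ell}] \le \ell^{\ell/2}$. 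Finally, write $\sum_i \lambda_i^2 \hat g_i^2 = \sum_i \lambda_i^2 + \sum_i \lambda_i^2(\hat g_i^2 - 1)$; the last sum is a sum of independent, centered, subexponential random variables with $\sum_i \Var(\lambda_i^2(\hat g_i^2-1)) \asymp \sum_i \lambda_i^4$ and $\max_i \|\lambda_i^2(\hat g_i^2-1)\|_{\psi_1} \asymp \max_i \lambda_i^2$, so a Bernstein-type moment inequality bounds its $(\ell/2)$-th moment by $C^{\ell/2}\max\{\sqrt{\ell}(\sum_i \lambda_i^4)^{1/2},\, \ell \max_i \lambda_i^2\}^{\ell/2}$. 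Using $\sum_i \lambda_i^4 \le (\max_i \lambda_i^2)(\sum_i \lambda_i^2)$, $\sum_i \lambda_i^2 = \|B\|_{S_2}^2$, and $\max_i \lambda_i^2 = \|B\|_2^2$, each of the resulting terms is $O(1)\cdot \max\{\sqrt{\ell}\,\|B\|_{S_2},\,\ell\,\|B\|_2\}^2/\ell$; substituting back up the chain of inequalities gives the claim.

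I expect the main obstacle to be bookkeeping the constants so that the final bound has the stated form with a single absolute constant $C$ rather than one growing with $\ell$; the point that makes this go through is that diagonalization collapses everything to one sum of independent variables, so the Bernstein moment inequality is invoked exactly once and no recursion (which would cost a factor $C^{\log \ell}$) is incurred. A secondary point requiring care is the Gaussianization step: one must compare Rademacher and Gaussian linear forms in the correct direction and only for $\ell \ge 2$, which is precisely where that hypothesis is used. The restriction that $\ell$ be a power of two is merely a convenience — it lets one pass cleanly from the $\ell$-th moment to an $(\ell/2)$-th moment — and can be removed afterward by monotonicity of $L_p$ norms.
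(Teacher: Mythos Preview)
The paper does not prove this statement: Theorem~\ref{thm:hw} appears in the Preliminaries as a quoted result from~\cite{HW71} (in the form given as Theorem~3.1 in~\cite{KN12}), with no accompanying argument, so there is nothing in the paper to compare your proof against.

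For what it is worth, your sketch is a correct and standard route to Hanson--Wright (essentially the Rudelson--Vershynin argument): the zero-diagonal reduction, decoupling for order-$2$ chaos, the Khintchine/Gaussian moment comparison for linear forms, diagonalization via rotational invariance, and the final Bernstein moment bound on $\sum_i \lambda_i^2(\hat g_i^2-1)$ all go through as you describe. The bookkeeping you flag does close with a single absolute constant, since each of $\|B\|_{S_2}^2$, $\sqrt{\ell}\,\|B\|_2\|B\|_{S_2}$, and $\ell\,\|B\|_2^2$ is at most $\max\{\sqrt{\ell}\,\|B\|_{S_2},\,\ell\,\|B\|_2\}^2/\ell$, so multiplying by the outer $\ell^{\ell/2}$ recovers the stated bound. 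One minor note: your argument only needs $\ell$ to be even (so that $\ell/2$ is an integer), not a power of two.
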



\section{Proof of the Main Theorem}

The main technical ingredient of this note is a generalization of the Hanson-Wright inequality~\cite{HW71} for $\eps$-biased distributions which we state and prove below.

\begin{lemma}\label{lem:hwbiased}
There exists a constant $C$ such that the following holds.
Let $(z_1, \ldots, z_n)$ be a sample from an $\eps$-biased $2\ell$-wise independent distribution from $\{1, -1\}^n$ for any integer $\ell \geq 2$ a power of $2$.
Then for any symmetric $B \in \R^{n \times n}$,
\[
\E\left[\left(z^T B z - \tr(B)\right)^{\ell}\right] \leq 
\eps \|B\|_{L_1}^{\ell}+
C^{\ell} \max\left\{\sqrt{\ell} \cdot \|B\|_{S_2}, \ell \cdot \|B\|_2 \right\}^{\ell}\]
\end{lemma}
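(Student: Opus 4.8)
The plan is to expand the moment as a sum of monomials in the $z_i$, use $z_i^2 = 1$ to collapse each monomial to one of small degree, and then separate the terms that already occur for a genuinely uniform sign vector — which the classical Hanson--Wright inequality (Theorem~\ref{thm:hw}) controls — from the remaining terms, which the $\eps$-bias hypothesis controls. First I would note that, since $z_i^2 = 1$, the diagonal of $B$ contributes $\sum_i B_{ii} z_i^2 = \tr(B)$, which cancels, so
\[
z^T B z - \tr(B) = \sum_{i \neq j} B_{ij}\, z_i z_j .
\]
Expanding the $\ell$-th power gives a sum, over all $\ell$-tuples of pairs $(i_1,j_1),\dots,(i_\ell,j_\ell)$ with $i_t \neq j_t$, of the terms $\bigl(\prod_{t=1}^{\ell} B_{i_t j_t}\bigr)\, z_{i_1} z_{j_1} \cdots z_{i_\ell} z_{j_\ell}$. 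Each such monomial $z_{i_1}z_{j_1}\cdots z_{i_\ell}z_{j_\ell}$ has total degree $2\ell$, so after reducing exponents modulo $2$ it equals $\prod_{k \in y} z_k$, where $y \subseteq [n]$ is the set of indices occurring an odd number of times among $i_1,j_1,\dots,i_\ell,j_\ell$; in particular $|y| \le 2\ell$.

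Next I would take expectations and split the sum according to whether $y = \emptyset$ or $y \neq \emptyset$. For a uniform $z \in \{-1,1\}^n$ only the monomials with $y = \emptyset$ survive, so the sum of $\prod_t B_{i_t j_t}$ over exactly those tuples equals $\E_{\mathrm{unif}}\bigl[(z^T B z - \tr(B))^{\ell}\bigr]$, which by Theorem~\ref{thm:hw} is at most $C^{\ell}\max\{\sqrt{\ell}\,\|B\|_{S_2},\, \ell\,\|B\|_2\}^{\ell}$. For every tuple with $0 < |y| \le 2\ell$, the $\eps$-biased $2\ell$-wise independence hypothesis gives $\bigl|\E[\prod_{k \in y} z_k]\bigr| \le \eps$, so the total contribution of those tuples to $\E\bigl[(z^T B z - \tr(B))^{\ell}\bigr]$ is, in absolute value, at most $\eps$ times the sum of $\prod_{t=1}^{\ell} |B_{i_t j_t}|$ over all tuples, i.e.
\[
\eps\Bigl(\sum_{i \neq j} |B_{ij}|\Bigr)^{\ell} \;\le\; \eps\,\|B\|_{L_1}^{\ell}.
\]
Since $\ell$ is even, the uniform moment is nonnegative, so adding the two estimates yields the claimed bound.

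The argument is essentially bookkeeping, so I do not expect a serious obstacle; the one point to get right is the degree count, namely that every monomial appearing in the expansion involves at most $2\ell$ indices counted with multiplicity — this is exactly what makes $2\ell$-wise independence the correct hypothesis and lets the $\eps$-bias bound apply to every non-paired term. The only mild subtlety is the direction of the inequality: we need only an upper bound on $\E\bigl[(z^T B z - \tr(B))^{\ell}\bigr]$, and since it is compared against the nonnegative uniform moment, it suffices to bound the error terms by $\eps\|B\|_{L_1}^{\ell}$ in absolute value rather than to track their signs.
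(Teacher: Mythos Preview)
Your proposal is correct and follows essentially the same argument as the paper: expand $(z^T B z - \tr B)^\ell$ into monomials indexed by $\ell$-tuples of off-diagonal pairs, split according to whether every index appears with even multiplicity, identify the even-parity part with the uniform moment handled by Theorem~\ref{thm:hw}, and bound the remaining part by $\eps\|B\|_{L_1}^{\ell}$ using the $\eps$-biased $2\ell$-wise independence. The paper phrases the split via the set $\mathcal{S}'$ of even-parity tuples rather than the parity set $y$, but this is only a notational difference.
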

\begin{proof}
Let $\mathcal{S} = \{(i, j) \mid i, j \in [n], i \neq j\}$.
Then,
\begin{equation}\label{eq:express}
(z^T B z - \tr(B))^{\ell} = \sum_{s \in \mathcal{S}^{\ell}} \prod_{k = 1}^{\ell} z_{s_{k, 1}}z_{s_{k, 2}}B_{s_{k, 1}, s_{k, 2}}.
\end{equation}
Let $\mathcal{S}'$ be the set of sequences in $\mathcal{S}^{\ell}$ such that each $i \in [n]$ appears in an even number of pairs.
Then Eq.~\eqref{eq:express} can be rewritten as
\begin{align}
 \left(\sum_{s \in \mathcal{S}^{\ell} \backslash \mathcal{S}'} \prod_{k = 1}^{\ell} z_{s_{k, 1}}z_{s_{k, 2}}B_{s_{k, 1}, s_{k, 2}}\right)&+
 \left(\sum_{s \in \mathcal{S}'} \prod_{k = 1}^{\ell} z_{s_{k, 1}}z_{s_{k, 2}}B_{s_{k, 1}, s_{k, 2}}\right) \nonumber \\
&= 
 \left(\sum_{s \in \mathcal{S}^{\ell} \backslash \mathcal{S}'} \prod_{k = 1}^{\ell} z_{s_{k, 1}}z_{s_{k, 2}}B_{s_{k, 1}, s_{k, 2}}\right)+
 \left(\sum_{s \in \mathcal{S}'} \prod_{k = 1}^{\ell} B_{s_{k, 1}, s_{k, 2}}\right) \label{eq:separate} \\
&\leq
 \left(\sum_{s \in \mathcal{S}^{\ell} \backslash \mathcal{S}'} \prod_{k = 1}^{\ell} z_{s_{k, 1}}z_{s_{k, 2}}B_{s_{k, 1}, s_{k, 2}}\right)
+
C^{\ell} \max\left\{\sqrt{\ell} \cdot \|B\|_{S_2}, \ell \cdot \|B\|_2 \right\}^{\ell}. \nonumber
\end{align}
The equality follows from the definition of $\mathcal{S}$ and the fact that $z_i^2 = 1$ for all $i$.
The inequality follows from the fact that when the $z_i$ are independent, the first sum in Eq.~\eqref{eq:separate} evaluates to $0$, and thus, Theorem~\ref{thm:hw} can be used to bound the second sum.

Because the $z_i$ come from an $\eps$-biased $2\ell$-wise independent distribution, 
\[
\left|\prod_{k = 1}^{\ell} z_{s_{k, 1}}z_{s_{k, 2}}\right| \leq \eps
\]
when $s \not\in \mathcal{S}'$.
Thus, the left-hand side of Eq.~\eqref{eq:separate} is bounded above by
\begin{align*}
\eps \left(\sum_{s \in \mathcal{S}^{\ell} \backslash \mathcal{S}'} \prod_{k = 1}^{\ell} \left|B_{s_{k, 1}, s_{k, 2}}\right|\right)
&+
C^{\ell} \max\left\{\sqrt{\ell} \cdot \|B\|_{S_2}, \ell \cdot \|B\|_2 \right\}^{\ell} \\
&\leq
\eps \left(\sum_{s \in \mathcal{S}^{\ell}} \prod_{k = 1}^{\ell} \left|B_{s_{k, 1}, s_{k, 2}}\right|\right)
+
C^{\ell} \max\left\{\sqrt{\ell} \cdot \|B\|_{S_2}, \ell \cdot \|B\|_2 \right\}^{\ell}
\\
&=
\eps \|B\|_{L_1}^{\ell}+
C^{\ell} \max\left\{\sqrt{\ell} \cdot \|B\|_{S_2}, \ell \cdot \|B\|_2 \right\}^{\ell}
\end{align*}
as desired.
\end{proof}

We now prove Theorem~\ref{thm:main} using the same main ideas as in ~\cite[Theorem 6]{KN10}.

\begin{proof}[Proof of Theorem~\ref{thm:main}]
We let $\Phi$ come from a $(k^2\ell/Q)^{-\ell/2}$-biased $\ell$-wise independent distribution for $\ell = O(Q \eta^2) = O(k \log(N/k))$, normalized so that all entries are from the set $\{-1/Q, 1/Q\}$.
There exists a construction using the desired number of random bits, $O(\ell \log(k)) = O(k\log(k)\log(N/k))$, due to~\cite{AGHP92, NN93}.

By Theorem 5.2 in~\cite{BDDW08}, it is enough to show that Eq.~\eqref{eq:jl} holds when $x$ is a $k$-sparse vector such that $\|x\|_2 =1$, and $\eta < 1$.

For every $k$-sparse vector $x$, let $B_x$ be a $Q N \times Q N$ block-diagonal matrix with $Q$ blocks, where each block is equal to $x x^T / Q$.
Let $z \in \R^{QN}$ contain each row of $\Phi$ stacked in a vector.
Note that $B_x$ contains at most $k^2 Q$ non-zero entries as $x$ is $k$-sparse, and thus $\|B_x\|_{L_1} \leq k Q^{1/2} \|B_x\|_{S_2}$.
By Lemma~\ref{lem:hwbiased}, for some constants $C_1$ and $C_2$,
\begin{align*}
\E\left[\left(z^T B_x z - \tr(B_x)\right)^{\ell}\right]  &\leq \ell^{\ell/2} \|B_x\|_{S_2}^{\ell} +C_1^{\ell} \max\left\{\sqrt{\ell} \cdot \|B_x\|_{S_2}, \ell \cdot \|B_x\|_2 \right\}^{\ell} \\
&\leq C_2^{\ell} \max\left\{\sqrt{\ell} \cdot \|B_x\|_{S_2}, \ell \cdot \|B_x\|_2 \right\}^{\ell}
\end{align*}
By direct computation, $z^T B_x z = \|\Phi x\|_2^2$, $\tr(B_x) = 1$, $\|B_x\|_{S_2}^2 = Q \|x/Q\|_2^4 = 1/Q$ and $\|B_x\|_2 = \| x^T x/Q\|_2 = 1/Q$.
Thus, the bound becomes
\[
\E\left[\left(\left\|\Phi x\right\|_2^2-1\right)^{\ell}\right] \leq C_2^{\ell} \max\left\{\sqrt{\ell/Q}, \ell/Q\right\}^{\ell}.
\]
By Markov's inequality, one has
\begin{align*}
\Pr\left[\left(\left\|\Phi x\right\|_2^2-1\right)^2 \geq \eta^2\right] 
&= \Pr\left[\left(\left\|\Phi x\right\|_2^2-1\right)^{\ell} \geq \eta^{\ell}\right] \\
&\leq C_2^{\ell} \eta^{-\ell} \max\left\{\sqrt{\ell/Q}, (\ell/Q)\right\}^{\ell} \\
&\leq \exp(- c Q \eta^2)
\end{align*}
for some constant $c$ and $\ell = O(Q \eta^2)$, where the last inequality follows by noting that $\eta < 1$.
\end{proof}

\bibliographystyle{alphaabbrv}
\bibliography{riplb}

\end{document}